%% LaTeX Template for ISIT 2019
%%
%% by Stefan M. Moser, October 2017
%% 
%% derived from bare_conf.tex, V1.4a, 2014/09/17, by Michael Shell
%% for use with IEEEtran.cls version 1.8b or later
%%
%% Support sites for IEEEtran.cls:
%%
%% http://www.michaelshell.org/tex/ieeetran/
%% http://moser-isi.ethz.ch/manuals.html#eqlatex
%% http://www.ctan.org/tex-archive/macros/latex/contrib/IEEEtran/
%%

\documentclass[conference,letterpaper]{IEEEtran}

%% depending on your installation, you may wish to adjust the top margin:
\addtolength{\topmargin}{9mm}

%\normalsize

%%%%%%
%% Packages:
%% Some useful packages (and compatibility issues with the IEEE format)
%% are pointed out at the very end of this template source file (they are 
%% taken verbatim out of bare_conf.tex by Michael Shell).
%
% *** Do not adjust lengths that control margins, column widths, etc. ***
% *** Do not use packages that alter fonts (such as pslatex).         ***
%
\usepackage[utf8]{inputenc} 
\usepackage[T1]{fontenc}
\usepackage{url}
\usepackage{ifthen}
\usepackage{cite}
\usepackage[cmex10]{amsmath} % Use the [cmex10] option to ensure complicance
% with IEEE Xplore (see bare_conf.tex)

\usepackage[table]{xcolor}
\usepackage{amssymb,amsthm}
\usepackage{subcaption}
\usepackage{tikz}
\usepackage{hyperref}
\usepackage[normalem]{ulem}
\usepackage[noend]{algpseudocode}
\usepackage{algorithm}
\usepackage{tabularx,environ}
\usepackage{multirow,array}

\DeclareMathOperator{\wt}{wt}

\DeclareMathOperator{\Span}{span}

\DeclareMathOperator{\ev}{ev}

\DeclareMathOperator{\diag}{diag}

\newtheorem{defn}{Definition}
\newtheorem{thm}{Theorem}

\newtheorem{lm}{Lemma}

\newtheorem{ex}{Example}

\interdisplaylinepenalty=2500 % As explained in bare_conf.tex

% ------------------------------------------------------------
\begin{document}
	
	\title{Polar Codes Do Not Have Many Affine Automorphisms} 
	
	%\author{Kirill~Ivanov,~\IEEEmembership{Student Member,~IEEE,}
		%	R\"udiger~L.~Urbanke,~\IEEEmembership{Senior Member,~IEEE,}
		%	\thanks{K. Ivanov and R. L. Urbanke are with the School of Computer and Communication
			%		Sciences, \'Ecole Polytechnique f\'ed\'erale de Lausanne (EPFL), CH-1015 Lausanne, Switzerland.\\ E-mail: (kirill.ivanov@epfl.ch, rudiger.urbanke@epfl.ch).}% <-this % stops a space
		%	}% <-this % stops a space

	\author{%
		\IEEEauthorblockN{Kirill Ivanov and R\"udiger Urbanke}
		\IEEEauthorblockA{\'Ecole Polytechnique F\'ed\'erale de Lausanne (EPFL)\\
			School of Computer and Communication Sciences\\
			CH-1015 Lausanne, Switzerland\\
			Email: \{kirill.ivanov, rudiger.urbanke\}@epfl.ch}
	}
	
	%\markboth{IEEE Communications Letters}%
	%{Submitted paper}
	
	\maketitle
	
	\begin{abstract}
		Polar coding solutions demonstrate excellent performance under the list decoding that is challenging to implement in hardware due to the path sorting operations. As a potential solution to this problem, permutation decoding recently became a hot research topic. However, it imposes more constraints on the code structure.
		
		In this paper, we study the structural properties of Arikan's polar codes. It is known that they are invariant under lower-triangular affine permutations among others. However, those permutations are not useful in the context of permutation decoding. We show that, unfortunately, the group of affine automorphisms of Arikan's polar codes asymptotically cannot be much bigger than the group of lower-triangular permutations.
		
	\end{abstract}
	
	%\begin{IEEEkeywords}
	%	Polar codes, permutation decoding, automorphism groups, affine automorphisms.
	%\end{IEEEkeywords}
	
	%\IEEEpeerreviewmaketitle
	
	\section{Introduction}
\IEEEPARstart{S}{ince} their invention \cite{Arikan2009polar},  polar codes have become a topic of extensive research regarding their theoretical properties as well as various aspects of their practical implementation. During these years, there has been significant progress in the design of polar-like codes. This has resulted in 3GPP adopting polar codes for the 5G control channel \cite{3gpp.38.212}. However, the standard decoding of polar codes includes list sorting operations that are costly to implement in hardware \cite{Balatsoukas2014llr}. 

Permutation decoding has the potential to offer the solution to both these problems. In short, we take several permutations of the received noisy vector that are in the automorphism group of the code \cite{geiselhart2021autpolar}. This generates different estimates of the information bits. We then process those estimates independently using a simple decoder to produce candidate codewords. The final decoding result is the candidate that is closest to the received vector. This way we can benefit from a higher degree of parallelization and easier implementation. 

There have been many attempts to design codes that perform well under permutation decoding. Notable contributions can be attributed to Kamenev et al. \cite{kamenev2019polar}, Geiselhart et al. \cite{geiselhart2021autpolar} and Pillet et al. \cite{pillet2021polar,pillet2021classification}. However, despite significant progress, the state-of-the-art polar-like solutions such as CRC-aided polar codes \cite{niu2012crc} or polar subcodes \cite{trifonov2016subcodes} remain out of reach in terms of performance. We addressed the question of code design from the theoretical point of view and established the asymptotic inefficiency of permutation decoding of codes invariant under a large number of variable permutations \cite{ivanov2021efficiency}.

In this paper, we study Arikan's classic polar codes, that are constructed so that the successive cancellation (SC) decoding error probability is minimized. Polar codes have a large group of automorphisms \cite{bardet2016algebraic} but all of these permutations correct identical noise patterns under SC decoding \cite{geiselhart2021autrm}. A recent study demonstrates that in principle it is possible that the automorphism group of polar codes is much larger and includes many permutations that are useful in the context of permutation decoding \cite{geiselhart2021autpolar}.

In this paper, we prove that Arikan's polar codes in fact cannot have many useful permutations in their automorphism group. This result is a direct consequence of the recent characterization of the automorphism groups of decreasing monomial codes \cite{li2021complete} which we plug into the framework of partial symmetry  \cite{ivanov2020symmon, ivanov2021efficiency}. Note that the experimental results for short codes that confirm our result can be found in \cite{geiselhart2021autpolar,pillet2021polar}. Our proof implies that constructing the code by minimizing its SC error probability also restricts the number of its affine automorphisms.
	\section{Background}

\subsection{Polar codes}
Let us denote by $[n]$ the set $\{0,\dots,n-1\}$. A $(n=2^m,k)$ polar code \cite{Arikan2009polar} with the set of frozen symbols $\mathcal F$ is a binary linear block code generated by rows with indices $i\in[n]\setminus \mathcal F$ of the matrix $\mathbf A_m=\begin{pmatrix}
	1 & 0\\
	1 & 1
\end{pmatrix}^{\otimes m}$. For a given binary memoryless symmetric (BMS) channel $W$, the set $\mathcal F$ contains the $n-k$ indices with the largest bit error probabilities under successive cancellation decoding.

\subsection{Automorphism groups}
Consider a permutation $\pi$ on $[n]$. Given a codeword $\mathbf c=(c_0,\dots,c_{n-1})$ of $\mathcal C$, applying $\pi$ leads to the vector $\pi(\mathbf c)=(c_{\pi(0)},\dots,c_{\pi(n-1)})$. If the permuted vector $\pi(\mathbf c)$ is also a codeword of $\mathcal C$ for each $\mathbf c\in \mathcal C$, we say that the code $\mathcal C$ is invariant under the action of $\pi$. The set of all such permutations $\pi$ forms a group and this group is called the automorphism group of the code. 

The main objects of interest in this paper are the affine permutations. They can be represented as maps
$$
\mathbf x\to \mathbf A \mathbf x + \mathbf b,\ \mathbf A\in \mathbb F_2^{m\times m}, \mathbf b \in \mathbb F_2^m,
$$
acting on the binary representation of integers from the set $[2^m]$, where the matrix $\mathbf A$ is invertible. A general affine group $GA(m,\mathbb F_2)$ is the group of all affine permutations.

\subsection{Boolean functions and monomial codes}
Let $\{ x_0, \ldots, x_{m-1} \}$ be a collection of $m$ variables taking their values in $\mathbb F_2$, let $\mathbf v = (v_0, \ldots, v_{m-1}) \in \mathbb{F}_2^m$ be any binary $m$-tuple, and let $\wt(\cdot)$ denote the Hamming weight. Then,
\[
x^{\mathbf v} = \prod_{i=0}^{m-1} x_i^{v_i}
\]
denotes a monomial of degree $\wt(\mathbf v)$. 

A function $f(\mathbf x)=f(x_0,\dots,x_{m-1}):\ \mathbb F_2^m\to \mathbb{F}_2$ is called Boolean. Any such function can be uniquely represented as an $m$-variate polynomial:
$$
f(x_0,\dots,x_{m-1})=\sum_{\mathbf v\in \mathbb F_2^m}a_{\mathbf v}x^{\mathbf v},
$$
where $a_{\mathbf v}\in \{0,1\}$. Its evaluation vector $\ev(f(\mathbf x))\in \mathbb F_2^{2^m}$ is obtained by evaluating $f$ at all points $\boldsymbol \alpha_i$ of $\mathbb{F}_2^m$. Note that any length-$2^m$ binary vector $\mathbf c$ can be considered as an evaluation vector of some function $f$. For the rest of the paper, we assume the standard bit ordering of points, i.e., $\boldsymbol\alpha_i$ being the binary expansion of integer $i$.

Consider a binary linear $(n=2^m,k,d)$ code $\mathcal C$ with generator matrix $\mathbf G$. Due to the correspondence between length-$2^m$ binary vectors and $m$-variate polynomials, instead of looking at $\mathbf G$ we can focus on the generating set of the code defined as 
$$
M_{\mathcal C}=\{f_i,0\le i < k | \ev(f_i)=\mathbf G_{i,*}\},
$$
where $\mathbf G_{i,*}$ are the rows of $\mathbf G$.

The generating set of polar codes can be easily deduced from the set of frozen symbols. Rows of matrix $\mathbf A_m$ are evaluation vectors of all possible monomials in $m$ variables. Consequently, any code spanned by a subset of rows of $\mathbf A_m$ has only monomials in its generating set. Such codes are called monomial, and polar codes are a notable example of monomial codes.

\subsection{Derivatives}
The derivative in direction $\mathbf b$ of the Boolean function $f$ is defined as
\begin{equation}
	\label{eq:monderiv}
	(D_{\mathbf b}f)(\mathbf x) = f(\mathbf x+\mathbf b)-f(\mathbf x).
\end{equation}
Given that $\mathbf g$ is the evaluation vector of $f$, from \eqref{eq:monderiv} it follows that  evaluation vector of $D_{\mathbf b}f$ can be computed as  
$$
\ev(D_{\mathbf b}f)=(g_{\mathbf 0}+g_{\mathbf 0\oplus \mathbf b},\dots,g_{\mathbf{n-1}}+g_{\mathbf{(n-1)}\oplus\mathbf b}).
$$

For the monomials, the expression \eqref{eq:monderiv} can be written as
\begin{equation*}
	D_{\mathbf b}x^{\mathbf v} = (\mathbf x+\mathbf b)^{\mathbf v}-x^{\mathbf v}=\prod_{i=0}^{m-1}(x_i+b_i)^{v_i}-\prod_{i=0}^{m-1}x_i^{v_i}.
\end{equation*}
If $\wt(\mathbf b)=1$, i.e., when $\mathbf b$ is an indicator vector $\mathbf e_i$ with the only nonzero entry in position $i$, the directional derivative coincides with the partial derivative $\frac{\partial f}{\partial x_i}$. 

The derivative in direction $\mathbf b$ of code $\mathcal C$ is a binary linear code with generating set
\begin{equation}
	\label{eq:proj}
	M_{\mathcal C\to \mathbf b}=\left\{D_{\mathbf b}f_i |f_i \in M_{\mathcal C}\right\}.
\end{equation}
By definition, $D_{\mathbf b}f_i$ has identical values at coordinates $\mathbf x$ and $\mathbf x+\mathbf b$ for all $\mathbf x\in \mathbb{F}_2^m$, so we can discard the coordinates $\mathbf x+\mathbf b$ and obtain the $(n^{(\mathbf b)}=2^{m-1},k^{(\mathbf b)}=\dim \Span M_{\mathcal C\to \mathbf b},d^{(\mathbf b)})$ code $\mathcal C^{(\mathbf b)}$, that will be further referred to as the derivative code.

%The partial derivatives of Boolean functions are inherently connected with the Plotkin construction. Namely, any function $f$ can be decomposed as
%\begin{align}
%\label{eq:plotkindec}
%	f(x_0,\dots,x_{m-1})&=g(x_1,\dots,x_{m-1})\\&+x_0h(x_1,\dots,x_{m-1}),
%\end{align}
%where $g(x_1,\dots,x_{m-1})$ takes identical values for $x_0=0$ and $x_0=1$, whereas $x_0h(x_1,\dots,x_{m-1})$ is only nonzero when $x_0=1$.

\subsection{Successive cancellation decoding}
Assume that the codeword $\mathbf c$ is transmitted through a BMS channel $W$ and the received vector is $\mathbf y$. The successive cancellation algorithm performs bit-by-bit estimation of the vector $\mathbf u$ as
\begin{equation}
	\label{mSCProb}
	\tilde u_i=\begin{cases}\arg\max_{u_i\in \{0,1\}} W^{(i)}(\mathbf y_0^{n-1},\mathbf {\tilde u}_0^{i-1}|u_i), &i\notin\mathcal F,\\
		0&i\in \mathcal F,
	\end{cases}
\end{equation}
where $W^{(i)}=W^{(\{-,+\}^m)}$ and is obtained by the recursive application of channel transformations 
$$W^{(-)}(y_0,y_1|u_0)=\frac{1}{2}\sum_{u_1\in \{0,1\}}W(y_0|u_0\oplus u_1)W(y_1|u_1)$$
and
$$W^{(+)}(y_0,y_1,u_0|u_1)=\frac{1}{2}W(y_0|u_0\oplus u_1)W(y_1|u_1).$$ 
The decoding process can be reformulated as the following recursive procedure:
\begin{enumerate}
	\item Recover $\mathbf c^{(-)}=\mathbf c_0^{n/2-1}\oplus \mathbf c_{n/2}^{n-1}$ from vector $\mathbf y^{(-)}$ that corresponds to the output of  channel $W^{(-)}$.
	\item Recover $\mathbf c^{(+)}=\mathbf c_{n/2}^{n-1}=\mathbf c^{(-)}\oplus \mathbf c_0^{n/2-1}$ from vector $\mathbf y^{(+)}$ that corresponds to the output of channel $W^{(+)}$, assuming that $\mathbf c^{(-)}$ is correct.
	\item Return $\mathbf c=(\mathbf c^{(-)}\oplus \mathbf c^{(+)}|\mathbf c^{(+)})$.
\end{enumerate}
In this perspective, we first (recursively) recover $\mathbf c^{(-)}\in\mathcal C^{(-)}$, assuming the transmission through the synthetic channel $W^{(-)}$, and then use it to (recursively) recover $\mathbf c^{(+)}\in\mathcal C^{(+)}$, assuming the transmission through the synthetic channel $W^{(+)}$. 

Observe now that any codeword of the code $\mathcal C^{(-)}=\{\mathbf c_0^{n/2-1}\oplus \mathbf c_{n/2}^{n-1}|\mathbf c \in \mathcal C\}$ that appears at the first step of the SC recursions can be written as $$\mathbf c^{(-)}=(c_{\mathbf 0}+c_{\mathbf 0\oplus\mathbf e_0},\dots,c_{\mathbf {n/2-1}}+c_{\mathbf{(n/2-1)}\oplus\mathbf e_0}),$$
and therefore the code $\mathcal C^{(-)}$ is a partial derivative of $\mathcal C$ w.r.t. $x_0$. Consequently, at each level of the SC recursions we pick a variable $x_i$ and perform the decomposition of $\mathcal C$ into codes $\mathcal C^{(-)}$ and $\mathcal C^{(+)}$, where the former is a partial derivative and the latter has generating set $M_{\mathcal C^{(+)}}=\{f\in \mathcal C|\frac{\partial f}{\partial x_i}=0\}$. The standard SC decoding implies picking the variable $x_i$ at level $i$, where level 0 is the topmost and at level $m$ we have length-1 codes that correspond to the information and frozen bits.

The essence of permutation decoding is to take several permutations of the received vector $\mathbf y$ so that we get different vectors $\mathbf y{(-)}$ and $\mathbf y^{(+)}$ during the recursions, increasing the chance of successful decoding. All permutations from the automorphism group of the code induce the same codes $\mathcal C^{(-)}$ and $\mathcal C^{(+)}$. When the action of $\pi$ can be expressed as a permutation of the bits of the integers from $[2^m]$, it can be also viewed as a permutation of variables $\{x_0,\dots,x_{m-1}\}$ and its action on the received vector corresponds to the change in the order of partial derivatives in the SC recursions.

\begin{ex}
	Consider the transmission of the all-zero codeword of the $(8,4,4)$ code $\mathcal C$ with $M_{\mathcal C}=\{1,x_0,x_1,x_2\}$ through the erasure channel and assume that the received vector is $\mathbf y=(\epsilon,0,\epsilon,\epsilon,0,\epsilon,0,0)$. Note that any derivative induces a repetition $(4,1,4)$ code $\mathcal C^{(-)}$ with $M_{\mathcal C^{(-)}}=\{1\}$.
	
	If we take $\pi=\{0,1,2\}$, then we get $$\mathbf y^{(-)}=(y_0+y_4,y_1+y_5,y_2+y_6,y_3+y_7)=(\epsilon,\epsilon,\epsilon,\epsilon).$$ This pattern is uncorrectable in $\mathcal C^{(-)}$ and SC decoding fails. On the other hand, $\pi=\{2,1,0\}$ gives $$\mathbf y^{(-)}=(y_0+y_1,y_2+y_3,y_4+y_5,y_6+y_7)=(\epsilon,\epsilon,\epsilon,0),$$ that can be corrected in $\mathcal C^{(-)}$ and SC decoding succeeds.
\end{ex}

%\begin{ex}
%	Consider the $(16,8,4)$ code $\mathcal C$ with $M_{\mathcal C}=\{1,x_0,x_1,x_2,x_3,x_1x_2,x_1x_3,x_2x_3\}$. In case of standard SC decoding we have $\mathcal C^{(-)}=\mathcal C^{(\mathbf e_0)}=\frac{\partial}{\partial x_0}\mathcal C$, which gives is a $(8,1,8)$ code with generating set $M_{\mathcal C^{(\mathbf e_0)}}=\{1\}$. On the other hand, variable permutation $\pi=(3,2,1,0)$ corresponds to  $\mathcal C^{(-)}=\mathcal C^{(\mathbf e_3)}=\frac{\partial}{\partial x_3}\mathcal C$, which gives is a $(8,3,4)$ code with generating set $M_{\mathcal C^{(\mathbf e_3)}}=\{1,x_1,x_2\}$.	
%\end{ex}
	\section{Polar codes cannot have many affine automorphisms}
\label{s:autpolar}
Over the past few years, various researchers have studied the automorphism group of polar codes as well as the construction of codes for permutation decoding. These works mostly focus on the subgroups of $GA(m,\mathbb F_2)$. In this section, we show how the automorphism groups of polar codes fit into our framework and consequently derive that polar codes cannot be invariant under many affine automorphisms.

\subsection{Known automorphisms of polar codes}

\begin{defn}[\cite{bardet2016algebraic}, Definition 3]
	Two monomials of the same degree are ordered as $x_{i_0}\dots x_{i_{w-1}}\preceq x_{j_0}\dots x_{j_{w-1}}$ if and only if for all $q\in [w]$ holds $i_q\le j_q$. This partial order is extended to the monomials of different degrees through divisibility.
\end{defn}

A monomial code $\mathcal C$ is called \textit{decreasing} if for any monomial $x^{\mathbf v}$ from $M_{\mathcal C}$ all monomials $x^{\mathbf t}\preceq x^{\mathbf v}$ are also in $M_{\mathcal C}$. Theorem 2 in \cite{bardet2016algebraic} states that an automorphism group of any decreasing monomial code contains the \textit{lower-triangular affine group} $LTA(m,\mathbb F_2)$, which is a subgroup of $GA(m,\mathbb F_2)$ that includes only matrices $\mathbf A$ that are lower-triangular, and polar codes are decreasing monomial codes \cite[Theorem 1]{bardet2016algebraic}. This is an important result that allows computing the number of minimum-weight codewords of polar codes. However, all permutations from $LTA(m,\mathbb F_2)$ correct identical error patterns in the context of SC decoding and therefore cannot bring any performance improvement \cite[Corollary 2.1]{geiselhart2021autrm}. In other words, the group $LTA(m,\mathbb F_2)$ is \textit{absorbed by the SC decoder}.

The next important step on this road can be attributed to Geiselhart et al., who introduced a larger automorphism group that appears in decreasing monomial codes and is actually useful for the permutation decoding \cite{geiselhart2021autpolar}. Namely, decreasing monomial codes are invariant under the \textit{block lower-triangular affine group} $BLTA(\mathbf s, m)$ for $\mathbf s=(s_0,\dots,s_{l-1}), \sum_{i}s_i=m$, which is the another subgroup of $GA(m,\mathbb F_2)$, where the matrix $\mathbf A$ has form
\begin{equation}
	\label{eq:blta}
	\mathbf A=\begin{pmatrix}
		\mathbf A_{0,0}&  & & \mathbf 0 \\
		\mathbf A_{1,0}& \mathbf A_{1,1} & & \\
		\vdots & \vdots & \ddots&  \\
		\mathbf A_{l-1,0}& \cdots & &\mathbf A_{l-1,l-1}
	\end{pmatrix},
\end{equation}
where $\mathbf A_{i,j}$ are $s_j\times s_i$ submatrices and all entries that lie above the blocks $\mathbf A_{i,i}$ are zero. The length $l$ of the vector $\mathbf s$ as well as its entries can be determined from $M_{\mathcal C}$.

Li et al. later proved that $BLTA(\mathbf s, m)$ is equal to the group of affine automorphisms of decreasing monomial codes \cite{li2021complete}. Pillet et al. also proved that if $s_0>1$, then the group $BLTA((2,1,\dots,1), m)$ is absorbed by the SC decoder \cite{pillet2021classification} (note that $BLTA((2,1,\dots,1), m)\subseteq BLTA(\mathbf s, m)$ whenever $s_0>1$). The size of $BLTA(\mathbf s, m)$ can be computed as 
\begin{equation}
	\label{eq:bltasize}
	|BLTA(\mathbf s, m)|=2^m\prod_{i=0}^{l-1}\left(2^{s_i\gamma_i}\prod_{j=0}^{s_i-1}\left(2^{s_i}-2^j\right)\right),
\end{equation}
where $\gamma_i=\sum_{j<i}s_i$ \cite{geiselhart2021autpolar}. In case of $l=m$ and $\mathbf s=(1,\dots,1)$ it is equal to $|LTA(m)|=2^{\frac{m(m-1)}{2}+m}$ and for $\mathbf s=(m)$ it coincides with $|GA(m,\mathbb F_2)|=O(2^{m^2+m})$.% \cite{sloane1978ecc}.

\subsection{New restrictions on the size of the automorphism group.}
The existing results state that the group of affine automorphisms of polar codes is $BLTA(\mathbf s, m)$, although no constraints on the values $s_i$ are reported. In this section, we demonstrate that the diagonal blocks cannot grow with $m$. More precisely, we prove the following result:

\begin{thm}[Polar codes do not have many affine automorphisms]
	\label{thm:polaraut}
	Consider a BMS channel $W$ and the sequence of polar codes $\{\mathcal C^m\}$ of rate $R=I(W)$ with increasing block lengths $n=2^m$ s.t. the set of frozen symbols of code $\mathcal C^m$ contains $2^m(1-I(W))$ indices that have the largest SC decoding error probabilities. Then codes $\mathcal C^m$ cannot be invariant under $BLTA(\mathbf s, m)$ s.t. there exists a block of size $s_i$ that is an increasing function of $m$.
\end{thm}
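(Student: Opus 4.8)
The plan is to argue by contradiction: suppose that along the sequence $\{\mathcal{C}^m\}$ the automorphism group is $BLTA(\mathbf{s},m)$ with some block of size $s_i = \omega(1)$. The key observation is that $BLTA(\mathbf{s},m)$ acts on the variables $\{x_0,\dots,x_{m-1}\}$ by permuting the coordinates \emph{within} a block of size $s_i$ arbitrarily (the block $\mathbf{A}_{i,i}$ ranges over all of $GL(s_i,\mathbb{F}_2)$, which in particular contains every coordinate transposition on those $s_i$ indices). I would first translate this into the language of the partial-symmetry / derivative framework of \cite{ivanov2020symmon,ivanov2021efficiency}: because the code is invariant under swapping two variables $x_a, x_b$ lying in the same diagonal block, the two partial derivative codes $\mathcal{C}^{(\mathbf{e}_a)}$ and $\mathcal{C}^{(\mathbf{e}_b)}$ are permutation-equivalent, and more strongly the synthetic channels $W^{(\{-,\dots\})}$ obtained by differentiating first in direction $a$ versus first in direction $b$ are identical. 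Hence the SC bit-error probabilities $Z_a$ and $Z_b$ associated with these two first-step decompositions coincide.

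**From equal error probabilities to a contradiction with the polar construction.** The heart of the argument is that minimizing the SC error probability forces the $m$ single-variable synthetic channels at the top level to be \emph{polarized}, i.e. their bit-error probabilities must be spread out, whereas block-invariance of size $s_i$ forces $s_i$ of them to be exactly equal. More concretely: applying the block automorphism, the $s_i$ variables in block $i$ must all have been assigned to the same "role" by the frozen-set construction, but the standard polar construction orders indices by their SC error probability and the $2^m$ synthetic channels have pairwise distinct (or at least sufficiently spread) parameters for a generic BMS channel — and even when ties occur they cannot occur among $\omega(1)$ indices, because the mutual informations of the level-one synthetic channels $W^{(-)}$ and $W^{(+)}$ differ by a constant gap $\Delta(W) = I(W^{(+)}) - I(W^{(-)}) > 0$ whenever $W$ is not perfect or useless. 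I would make this quantitative: invariance under a block of size $s_i$ implies that the frozen/information status of $\mathcal{C}^m$ is unchanged under all of $GL(s_i,\mathbb{F}_2)$ acting on the first $s_i$ recursion levels, which after $t \le s_i$ steps of the recursion would force $2^t$ many synthetic channels to be mutually permutation-equivalent; but the polarization process drives the Bhattacharyya parameters to $\{0,1\}$ at an exponential rate in $\sqrt{m}$, so having $2^{\omega(1)}$ of them equal contradicts the strict monotonicity / speed-of-polarization bounds of Ar\i kan–Telatar once $m$ is large.

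**Main obstacle.** The delicate point — and the step I expect to require the most care — is ruling out \emph{accidental} equalities among the SC error probabilities: it is conceivable a priori that many synthetic channels share the same bit-error probability without the code being block-invariant, and conversely one must show that block-invariance of the \emph{code} (not just of the channel parameters) genuinely propagates down all $s_i$ levels of the recursion rather than collapsing after one step. Handling this cleanly means invoking the complete characterization of \cite{li2021complete}: the value of $s_i$ is read off from $M_{\mathcal{C}^m}$ by a purely combinatorial condition on which monomials are present, so I would show that this combinatorial condition, combined with the decreasing property, forces a structured pattern in the monomial set that is incompatible with the "staircase" pattern produced by sorting indices by SC reliability for any fixed nontrivial $W$. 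The remaining verification — that the reliability ordering for a fixed BMS channel really does break every would-be block of super-constant size for all large $m$ — is where one plugs in the quantitative polarization estimates; I would keep that to a citation plus a short asymptotic computation rather than reproving speed of polarization here.
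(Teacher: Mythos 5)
Your opening step is sound and matches the paper's setup: block-invariance under $BLTA(\mathbf s,m)$ does propagate through the SC recursion, so the codes appearing at level $\nu_i$ are invariant under all permutations of the $s_i$ block variables and their partial derivatives in those directions coincide (this is exactly Lemma~\ref{lm:recauts} and Theorem~\ref{thm:bltasym}). However, the route you take from there to a contradiction has a genuine gap. You try to derive a contradiction from the reliability ordering itself, arguing that block-invariance would force $\omega(1)$ synthetic channels to have \emph{equal} SC error probabilities, which polarization supposedly forbids. Both halves of this are problematic: (i) invariance of the frozen set under a permutation of bit positions does not require equal reliabilities at all --- it only requires that each orbit of indices lies entirely inside or entirely outside the frozen set, which is compatible with pairwise distinct error probabilities; and (ii) polarization does not make the parameters ``spread out'' --- on the contrary, all but a vanishing fraction $\Theta(2^{-m/\mu})$ of the synthetic channels have nearly identical (almost perfect or almost useless) parameters, so an argument based on ruling out near-ties among $2^{\omega(1)}$ channels cannot succeed. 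Your third paragraph correctly senses this difficulty but defers the actual combinatorial-versus-reliability incompatibility to ``a short asymptotic computation,'' and that deferred step is precisely the theorem; it is not obtained by citing speed-of-polarization bounds.

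The missing idea, and the heart of the paper's proof, is a \emph{rate-based} comparison rather than a tie-breaking one. From $s_i$-symmetry of a code $\mathcal C^{(\mathbf j)}\in\phi(\mathcal C^m,i)$, the rate drop to its derivative code satisfies $R(\mathcal C^{(\mathbf j)})-R(\mathcal C^{(-,\mathbf j)})\le O(1/\sqrt{s_i})$, a quantitative consequence of the partial-symmetry framework (Proposition~7 of the symmetry paper, via the technique of Lemma~7 in the Reed--Muller capacity proof) that your proposal never invokes or replaces. On the other hand, because each $\mathcal C^{(\mathbf j)}$ is itself a polar code for the synthetic channel $W^{(\mathbf j)}$, the finite-length scaling law gives $R(\mathcal C^{(\mathbf j)})-R(\mathcal C^{(-,\mathbf j)})\ge I(W^{(\mathbf j)})-I(W^{(-,\mathbf j)})-\Theta(2^{-(m-\nu_i)/\mu})$, and the same scaling law guarantees a path $\mathbf j$ for which $I(W^{(\mathbf j)})$ lies in a fixed interval $[a,b]\subset(0,1)$, so the capacity gap $I(W^{(\mathbf j)})-I(W^{(-,\mathbf j)})$ is bounded away from zero. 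The contradiction is between an upper bound tending to $0$ (since $s_i\to\infty$) and a lower bound bounded away from $0$; this is where the gap $I(W^{(+)})-I(W^{(-)})>0$ that you mention actually enters, but applied to the code rates at level $\nu_i$, not to distinctness of reliabilities in the frozen-set ordering. Without the $O(1/\sqrt{s_i})$ rate-drop bound for symmetric codes and the rate-approximates-capacity step, your proposal does not close.
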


To prove Theorem \ref{thm:polaraut}, we first introduce the concept of partial symmetry, then demonstrate the relation between the invariance under $BLTA(\mathbf s, m)$ and the partial symmetry of codes that appear during the recursions of successive cancellation decoding and conclude by showing that the structural properties from polar coding and partial symmetry formulations contradict each other.
\begin{defn}[\cite{ivanov2021efficiency}, Definition 2]
	\label{def:partsym}
	A $(n=2^m,k)$ code $\mathcal C$ is called $t$-symmetric if $t$ of its partial derivatives have equal dimensions and $m-t$ remaining have dimensions that are strictly greater.
\end{defn}
A code is fully symmetric if $t=m$, non-symmetric if $t=1$ and partially symmetric otherwise. 

Let us now show the correspondence between the invariance under $BLTA(\mathbf s, m)$ and the partial symmetry. Consider the matrix $\mathbf A'$ which has a block-permutation-diagonal form $\diag(\mathbf P_0,\dots,\mathbf P_{l-1})$, i.e., all its non-diagonal blocks are zero and all submatrices $\mathbf A'_{i,i}$ are some $s_i\times s_i$ permutation matrices $\mathbf P_i$:
\begin{equation*}
	\mathbf A'=\begin{pmatrix}
		\mathbf P_{0}&  & & \mathbf 0 \\
		& \mathbf P_{1} & & \\
		\vdots & \vdots & \ddots&  \\
		\mathbf 0& \cdots & &\mathbf P_{l-1}
	\end{pmatrix}.
\end{equation*}

Matrix $\mathbf A'$ belongs to $BLTA(\mathbf s, m)$. Observe that it has a block-permutation structure and consequently is an element of the group $\mathcal P_m$ of $m\times m$ permutation matrices. Authors in \cite{geiselhart2021autpolar} and \cite{pillet2021classification} already noted that if a code is invariant under $BLTA(\mathbf s, m)$, it is $s_{l-1}$-symmetric, which corresponds to the topmost step of the SC recursion. In fact, we can prove a much stronger result. 

Let us start with the following lemma:
\begin{lm}
	\label{lm:recauts}
	Consider a monomial code $\mathcal C$ invariant under permutations of form $\diag(\mathbf P_0,\dots,\mathbf P_{l-2},\mathbf P_{l-1})$ and assume that $\mathbf P_{l-1}$ is $s\times s$ matrix. Then the codes $\mathcal C^{(-)}$ and $\mathcal C^{(+)}$ are invariant under permutations of form $\diag(\mathbf P_0,\dots,\mathbf P_{l-2},\mathbf P'_{l-1})$, where $\mathbf P'_{l-1}$ has size $(s-1)\times(s-1)$.
\end{lm}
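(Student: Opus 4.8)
The plan is to work directly with the generating sets and monomials, using the fact that the action of a block-permutation matrix $\diag(\mathbf P_0,\dots,\mathbf P_{l-1})$ on $[2^m]$ is nothing more than a permutation of the variables $\{x_0,\dots,x_{m-1}\}$ that respects the block structure: it permutes the first $s_0$ variables among themselves, the next $s_1$ among themselves, and so on, with $\mathbf P_{l-1}$ permuting the last block $B=\{x_{m-s},\dots,x_{m-1}\}$. So the hypothesis says $M_{\mathcal C}$ is closed under every such variable permutation. The decomposition performed by SC at the top level picks one variable; to make the block structure line up with the statement (where only $\mathbf P_{l-1}$ shrinks, to size $(s-1)\times(s-1)$), the natural choice is to decompose with respect to a variable $x_j \in B$, say $x_j = x_{m-1}$. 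Then $\mathcal C^{(-)}$ has generating set $\{\partial f/\partial x_{m-1} : f \in M_{\mathcal C}\}$ and $\mathcal C^{(+)}$ has generating set $\{f \in M_{\mathcal C} : \partial f/\partial x_{m-1} = 0\}$, i.e.\ the monomials of $M_{\mathcal C}$ not divisible by $x_{m-1}$; in both cases the variables still in play are $\{x_0,\dots,x_{m-1}\}\setminus\{x_{m-1}\}$, whose last block is now $B' = B\setminus\{x_{m-1}\}$ of size $s-1$.

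First I would treat $\mathcal C^{(+)}$, which is the easy case. Its generating set is the set of $x^{\mathbf v}\in M_{\mathcal C}$ with $v_{m-1}=0$. Take any permutation $\sigma$ of the form $\diag(\mathbf P_0,\dots,\mathbf P_{l-2},\mathbf P'_{l-1})$ acting on the surviving variables, where $\mathbf P'_{l-1}$ permutes $B'$. Extend $\sigma$ to a permutation $\widetilde\sigma$ of all $m$ variables by fixing $x_{m-1}$; then $\widetilde\sigma = \diag(\mathbf P_0,\dots,\mathbf P_{l-2},\mathbf P_{l-1})$ with $\mathbf P_{l-1}$ the block-diagonal matrix $\diag(\mathbf P'_{l-1},1)$ fixing the last coordinate, which is exactly of the form in the hypothesis. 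So $\widetilde\sigma$ preserves $M_{\mathcal C}$; since $\widetilde\sigma$ fixes $x_{m-1}$ it preserves divisibility by $x_{m-1}$, hence it preserves the subset of monomials with $v_{m-1}=0$, which is $M_{\mathcal C^{(+)}}$. Thus $\mathcal C^{(+)}$ is invariant under $\sigma$, as required.

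Next I would handle $\mathcal C^{(-)}$. Its generating set, after discarding the redundant coordinates, is spanned by $\{\partial (x^{\mathbf v})/\partial x_{m-1} : x^{\mathbf v}\in M_{\mathcal C}\}$: for $x^{\mathbf v}$ divisible by $x_{m-1}$ this derivative is the monomial $x^{\mathbf v}/x_{m-1}$ in the variables $\{x_0,\dots,x_{m-2}\}$, and for $x^{\mathbf v}$ not divisible by $x_{m-1}$ it is $0$. Again take $\sigma = \diag(\mathbf P_0,\dots,\mathbf P_{l-2},\mathbf P'_{l-1})$ on the surviving variables and extend it to $\widetilde\sigma$ fixing $x_{m-1}$, so $\widetilde\sigma$ preserves $M_{\mathcal C}$. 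The point is that partial differentiation with respect to $x_{m-1}$ commutes with any variable permutation that fixes $x_{m-1}$: $\partial(\widetilde\sigma\cdot x^{\mathbf v})/\partial x_{m-1} = \widetilde\sigma\cdot(\partial x^{\mathbf v}/\partial x_{m-1})$, and since $\widetilde\sigma$ fixes $x_{m-1}$ its restriction to the remaining variables is precisely $\sigma$. Hence applying $\sigma$ permutes the spanning set $\{\partial(x^{\mathbf v})/\partial x_{m-1}\}$ of $M_{\mathcal C^{(-)}}$ among itself (mapping the zero elements to zero and the monomials $x^{\mathbf v}/x_{m-1}$ to $\widetilde\sigma(x^{\mathbf v})/x_{m-1}$, which is again in the set because $\widetilde\sigma(x^{\mathbf v})\in M_{\mathcal C}$). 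Therefore $\sigma$ is an automorphism of $\mathcal C^{(-)}$.

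The main thing to get right — and the step I would be most careful about — is the bookkeeping that identifies ``a permutation of the surviving $m-1$ variables of block-diagonal type $\diag(\mathbf P_0,\dots,\mathbf P_{l-2},\mathbf P'_{l-1})$'' with ``the restriction of a hypothesis-permitted permutation of all $m$ variables that fixes $x_{m-1}$.'' Once that correspondence is set up, both cases reduce to the two trivial facts that a variable permutation fixing $x_{m-1}$ (i) preserves divisibility by $x_{m-1}$ and (ii) commutes with $\partial/\partial x_{m-1}$. Two minor points worth a sentence each: SC at the top level conventionally differentiates with respect to $x_0$, so one should remark that by the block-permutation invariance any variable within a block can be chosen for the split (or simply index the last block so that the chosen variable is its last element); and the permutation $\sigma$ of the surviving variables is genuinely arbitrary of the stated form, so we really do get invariance under the full group $\diag(\mathbf P_0,\dots,\mathbf P_{l-2},\mathbf P'_{l-1})$, not just one element.
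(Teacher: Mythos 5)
Your argument is correct and is essentially the paper's own proof: the paper likewise splits $M_{\mathcal C}$ into the monomials divisible and not divisible by the split variable, notes that both parts are closed under any automorphism of $\mathcal C$ fixing that variable (your divisibility/commutation-with-$\partial$ facts), and identifies those automorphisms with the reduced block form $\diag(\mathbf P_0,\dots,\mathbf P_{l-2},\mathbf P'_{l-1})$. The only bookkeeping difference is the indexing convention: in the paper the block $\mathbf P_{l-1}$ acts on the first variables $\{x_0,\dots,x_{s-1}\}$ (as Theorem~\ref{thm:bltasym} and the $(16,7)$ example show), so the standard SC split w.r.t.\ $x_0$ already removes a variable from that block and your re-indexing remark (splitting on $x_{m-1}$ instead) is unnecessary but harmless.
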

\begin{proof}
	When the code $\mathcal C$ is monomial, the codes $\mathcal C^{(-)}$ and $\mathcal C^{(+)}$ correspond to the partition of $M_{\mathcal C}$ into two sets of monomials that include and do not include the variable $x_0$, respectively. Both partitions are closed under any automorphism of $\mathcal C$ on variables $\{x_1,\dots,x_{m-1}\}$, and it remains to observe that any such automorphism can be written in matrix form as $\diag(\mathbf P_0,\dots,\mathbf P_{l-2},\mathbf P'_{l-1})$.
\end{proof}

We use this result to prove the general structural property of monomial codes.

\begin{thm}
	\label{thm:bltasym}
	Consider a monomial code $\mathcal C$ that is invariant under $\diag(\mathbf P_0,\dots,\mathbf P_{l-1})$ and define the set of codes that appear at level $\nu_i=\sum_{j>i}s_j$ of the SC recursions
	$$
	\phi(\mathcal C,i)=\{\mathcal C^{(\{-,+\}^{\nu_i})}\}.
	$$
	Then for $i<l$ any element of $\phi(\mathcal C,i)$ is invariant under permutations on variables $\{x_{\nu_i},\dots,x_{\nu_i+s_i-1}\}$.  %a $s_i$-symmetric code.
\end{thm}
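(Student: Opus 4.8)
The plan is to obtain Theorem~\ref{thm:bltasym} by iterating Lemma~\ref{lm:recauts} down the successive cancellation recursion. First I would record two preliminary observations. (i) The monomial property propagates: for a monomial code $\mathcal D$, $M_{\mathcal D^{(+)}}$ is the set of monomials of $M_{\mathcal D}$ not divisible by the split variable, and $M_{\mathcal D^{(-)}}$ is the set of cofactors $x^{\mathbf w}$ with $x_\nu x^{\mathbf w}\in M_{\mathcal D}$; hence every code in $\phi(\mathcal C,i)$ is again monomial, so Lemma~\ref{lm:recauts} can be applied repeatedly. (ii) The orientation implicit in Lemma~\ref{lm:recauts} is that differentiating in the first remaining variable shrinks the \emph{last} listed diagonal block; since standard SC consumes $x_0,x_1,\dots$ in order, the block $\mathbf P_i$ of $\diag(\mathbf P_0,\dots,\mathbf P_{l-1})$ acts precisely on the window $\{x_{\nu_i},\dots,x_{\nu_i+s_i-1}\}$ sitting between SC levels $\nu_i$ and $\nu_{i-1}$ (with the convention $\nu_{-1}:=m$).

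The core is the following strengthening, which I would prove by induction on the recursion depth $\nu\in\{0,1,\dots,\nu_0\}$: writing $\nu=\nu_i+r$ with $0\le r<s_i$, every code at level $\nu$ is a monomial code in $x_\nu,\dots,x_{m-1}$ invariant under every permutation of the form $\diag(\mathbf P_0,\dots,\mathbf P_{i-1},\mathbf R)$, where each $\mathbf P_j$ ranges over the $s_j\times s_j$ permutation matrices on $\{x_{\nu_j},\dots,x_{\nu_j+s_j-1}\}$ and $\mathbf R$ over the $(s_i-r)\times(s_i-r)$ permutation matrices on the leftover window $\{x_\nu,\dots,x_{\nu_{i-1}-1}\}$. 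The base case $\nu=0$ (so $i=l-1$, $r=0$) is exactly the hypothesis of the theorem. For the inductive step I would apply Lemma~\ref{lm:recauts} to each level-$\nu$ code, after the harmless relabeling $x_\nu,x_{\nu+1},\dots\mapsto x_0,x_1,\dots$: its two SC children are monomial and invariant under the same product with $\mathbf R$ replaced by a permutation matrix $\mathbf R'$ of size $(s_i-r)-1$. If $r+1<s_i$ this is exactly the claimed form at level $\nu+1$ (same index $i$, offset $r+1$); if $r+1=s_i$, i.e.\ $\nu+1=\nu_{i-1}$, then $\mathbf R'$ has size $0$ and drops out, so $\mathbf P_{i-1}$ becomes the new trailing block of full size $s_{i-1}$, which is again the claimed form (index $i-1$, offset $0$). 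This closes the induction.

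Specializing to $\nu=\nu_i$ (offset $r=0$) for each $i<l$ then shows that every code in $\phi(\mathcal C,i)$ is invariant under $\diag(\mathbf P_0,\dots,\mathbf P_{i-1},\mathbf R)$ for all choices of the blocks; taking $\mathbf P_0=\dots=\mathbf P_{i-1}=\mathbf I$ and letting $\mathbf R$ range over all $s_i\times s_i$ permutation matrices yields invariance under every permutation of $\{x_{\nu_i},\dots,x_{\nu_i+s_i-1}\}$, as required. I expect the only genuinely delicate points to be organizational rather than mathematical: keeping the ``last listed block $\leftrightarrow$ first remaining variables'' orientation consistent throughout, and handling the block-boundary step $\nu+1=\nu_{i-1}$ where one diagonal block is exhausted and the next takes over. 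No new estimates enter — the whole argument is a disciplined repetition of Lemma~\ref{lm:recauts}.
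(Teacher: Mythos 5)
Your proposal is correct and follows essentially the same route as the paper: the paper's proof is precisely a recursive application of Lemma~\ref{lm:recauts} along the SC decomposition from $\mathcal C$ down to level $\nu_i$, yielding invariance under $\diag(\mathbf P_0,\dots,\mathbf P_i)$ and hence under all permutations of $\{x_{\nu_i},\dots,x_{\nu_i+s_i-1}\}$. Your version merely makes explicit the bookkeeping the paper leaves implicit (monomiality of the derivative codes, variable relabeling, and the block-boundary step), all of which is consistent with the paper's conventions.
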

\begin{proof}
	The standard SC schedule implies that the recursion starts from the derivative $\partial/\partial x_0$ and ends with the length-1 codes after taking the derivative $\partial/\partial x_{m-1}$. At level $\nu_i$ we have a code $\mathcal C^{(\{-,+\}^{\nu_i})}$ that is by definition an element of $\phi(\mathcal C,i)$, which is decomposed into codes $\mathcal C^{(-)}$ and $\mathcal C^{(+)}$ w.r.t. variable $x_{\nu_i}$. We apply Lemma \ref{lm:recauts} recursively starting from $\mathcal C$ until we reach the level $\nu_i$ to conclude that any code from $\phi(\mathcal C,i)$ is invariant under the permutations of form $\diag(\mathbf P_0,\dots,\mathbf P_i)$ and consequently the permutations on variables $\{x_{\nu_i},\dots,x_{\nu_i+s_i-1}\}$.
\end{proof}

\begin{ex}
	Consider the $(16,7)$ code $\mathcal C$ with $M_{\mathcal C}=\{1,x_0,x_1,x_2,x_3,x_0x_2,x_0x_3\}$, which is invariant under $BLTA((2,1,1), 4)$. Figure \ref{fig:decomp} demonstrates the generating sets of codes $\mathcal C^{(-)},\mathcal C^{(+)}$ as well as $\mathcal C^{(--)},\mathcal C^{(-+)},\mathcal C^{(+-)}$ and $\mathcal C^{(++)}$ that appear during the SC decoding of $\mathcal C$. Variables used for the decomposition at each level are painted in red. It is easy to verify that all these codes are invariant under the permutations on variables $x_2$ and $x_3$.
	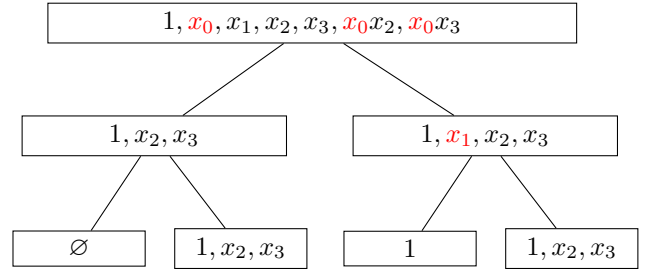
\begin{figure}
		\centering
		\begin{tikzpicture}
	\tikzstyle{code} = [draw,fill=none,minimum size=1.2em]
	\node[code,minimum width=20em] (v2) at (-3,0) {$1,\textcolor{red}{x_0},x_1,x_2,x_3,\textcolor{red}{x_0}x_2,\textcolor{red}{x_0}x_3$};
	\node[code,minimum width=10em] (v1) at (-5.1,-1.5) {$1,x_2,x_3$};
	\node[code,minimum width=10em] (v3) at (-0.7,-1.5) {$1,\textcolor{red}{x_1},x_2,x_3$};
	\node[code,minimum width=5em] (v4) at (-6.1,-3) {$\varnothing$};
	\node[code,minimum width=5em] (v5) at (-3.95,-3) {$1,x_2,x_3$};
	\node[code,minimum width=5em] (v6) at (-1.7,-3) {$1$};
	\node[code,minimum width=5em] (v7) at (0.45,-3) {$1,x_2,x_3$};
	
	\draw  (v2) edge (v1);
	\draw  (v2) edge (v3);
	\draw  (v1) edge (v4);
	\draw  (v1) edge (v5);
	\draw  (v3) edge (v6);
	\draw  (v3) edge (v7);
\end{tikzpicture}
		\caption{Two levels of the code decomposition tree during the standard SC decoding}
		\label{fig:decomp}
	\end{figure}
\end{ex}

Now we are ready to prove Theorem \ref{thm:polaraut}.

\subsection{Proof of Theorem \ref{thm:polaraut}.}
We prove it by contradiction. Assume that the code $\mathcal C^m$ is invariant under the action of $BLTA(\mathbf s, m)$ with a diagonal block of size $s_i$ that is an increasing function of $m$. Then from Theorem \ref{thm:bltasym} it follows that any code $\mathcal C^{(\mathbf j)}\in \phi(\mathcal C^m,i), \mathbf j\in\{-,+\}^{\nu_i}$ is invariant under the permutations on $s_i$ variables $\{x_{\nu_i},\dots,x_{\nu_i+s_i-1}\}$ and consequently the partial derivative w.r.t. any of these variables gives the same code. By definition, the code $\mathcal C^{(-,\mathbf j)}$ is a partial derivative w.r.t. $x_{\nu_i}$ and hence the difference between $R(\mathcal C^{(-,\mathbf j)})$ and $R(\mathcal C^{(\mathbf j)})$ is maximized when the code $\mathcal C^{(\mathbf j)}$ is $s_i$-symmetric. However, from \cite[Proposition 7]{ivanov2021efficiency} we know that this difference actually converges to zero when $s_i$ grows with $m$. More precisely, using the same technique as in \cite[Lemma 7]{reeves2021rm}, we get that $R(\mathcal C^{(\mathbf j)})-R(\mathcal C^{(-,\mathbf j)})\le O(\frac{1}{\sqrt{s_i}})$.

Recall that a polar code by construction is a Plotkin concatenation of two polar codes for the corresponding synthetic channels $W^{(-)}$ and $W^{(+)}$, where $I(W^{(-)})<I(W)$ and $I(W^{(+)})>I(W)$ when $I(W)\notin \{0,1\}$. Applying this definition recursively, we get that code $\mathcal C^{(\mathbf j)}$ needs to be a polar code constructed for the channel $W^{(\mathbf j)}$ and its derivative w.r.t. $x_{\nu_i}$ is a polar code for the channel $W^{(-,\mathbf j)}$. We know from \cite{hassani2014finite} that the ratio of non-perfectly polarized synthetic channels scales as $\Theta(2^{-m/\mu})$, where $\mu$ is the polar scaling exponent, bounded between 3.579 and 6 for any BMS channel. The code $\mathcal C^{(\mathbf j)}$ has length $2^{m-\nu_i}$ and therefore the difference between $R(\mathcal C^{(-,\mathbf j)})$ and $I(W^{(-,\mathbf j)})$ is at most $\Theta(2^{-(m-\nu_i)/\mu})$.

Let us now compare the statements from the two paragraphs. Partial symmetry of code $\mathcal C^{(\mathbf j)}$ gives us an upper bound 
\begin{equation}
	\label{eq:rub}
	R(\mathcal C^{(\mathbf j)})-R(\mathcal C^{(-,\mathbf j)})\le O\left(\frac{1}{\sqrt{s_i}}\right).
\end{equation}
On the other hand, from the nested property of polar codes it follows that
\begin{equation}
	\label{eq:lub}
	R(\mathcal C^{(\mathbf j)})-R(\mathcal C^{(-,\mathbf j)})\ge I(W^{(\mathbf j)})-I(W^{(-,\mathbf j)})-\Theta(2^{-(m-\nu_i)/\mu}).
\end{equation}
We can use again the result from \cite{hassani2014finite} that states that for any interval $[a,b]$, where $a,b\in(0,1)$, the number of channels with capacities that fall into this interval is $\Theta(2^{m(1-1/\mu)})$. Therefore, for any fixed interval $[a,b]$ there exists a channel $W^{(\mathbf j)}$ s.t. $I(W^{(\mathbf j)})\in[a,b]$ and consequently 
$$I(W^{(\mathbf j)})-I(W^{(-,\mathbf j)})\ge \min_{I(\hat W)\in [a,b]}\left(I(\hat W)-I(\hat W^{(-)})\right)>0,$$
where the second inequality holds because all channels in the interval have nontrivial capacities. Hence, we have an upper bound \eqref{eq:rub} that converges to zero because by assumption $s_i$ is an increasing function of $m$ and a lower bound \eqref{eq:lub} that is bounded away from zero, which gives us the contradiction.
	\section{Conclusion}
In this paper, we studied the automorphism group of polar codes. Using our framework of partially symmetric monomial codes, we demonstrated that polar codes can only be invariant under the block-lower-triangular affine permutations if all diagonal blocks are small. More precisely, there cannot exist a sequence of classic polar codes so that the size of the diagonal block grows with the code length. The diagonal blocks induce the affine permutations that are useful for permutation decoding, hence we can conclude that polar codes do not have many useful affine automorphisms.

Our proof shows that optimizing a polar-like code for the SC decoder negatively affects its group of affine automorphisms and can be considered as another confirmation that the design of polar-like codes with good permutation decoding performance is a highly challenging task.

	%%%%%%
	%% To balance the columns at the last page of the paper use this
	%% command:
	%%
	\enlargethispage{-1.2cm} 
	%%
	%% If the balancing should occur in the middle of the references, use
	%% the following trigger:
	%%
	\IEEEtriggeratref{3}
	%%
	%% which triggers a \newpage (i.e., new column) just before the given
	%% reference number. Note that you need to adapt this if you modify
	%% the paper.  The "triggered" command can be changed if desired:
	%%
	%\IEEEtriggercmd{\enlargethispage{-20cm}}
	%%
	%%%%%%

	\bibliographystyle{IEEEtran}
	\bibliography{biblio}

% Generated by IEEEtran.bst, version: 1.14 (2015/08/26)
\begin{thebibliography}{10}
\providecommand{\url}[1]{#1}
\csname url@samestyle\endcsname
\providecommand{\newblock}{\relax}
\providecommand{\bibinfo}[2]{#2}
\providecommand{\BIBentrySTDinterwordspacing}{\spaceskip=0pt\relax}
\providecommand{\BIBentryALTinterwordstretchfactor}{4}
\providecommand{\BIBentryALTinterwordspacing}{\spaceskip=\fontdimen2\font plus
\BIBentryALTinterwordstretchfactor\fontdimen3\font minus
  \fontdimen4\font\relax}
\providecommand{\BIBforeignlanguage}[2]{{%
\expandafter\ifx\csname l@#1\endcsname\relax
\typeout{** WARNING: IEEEtran.bst: No hyphenation pattern has been}%
\typeout{** loaded for the language `#1'. Using the pattern for}%
\typeout{** the default language instead.}%
\else
\language=\csname l@#1\endcsname
\fi
#2}}
\providecommand{\BIBdecl}{\relax}
\BIBdecl

\bibitem{Arikan2009polar}
E.~Arikan, ``Channel {P}olarization: A {M}ethod for {C}onstructing
  {C}apacity-{A}chieving {C}odes for {S}ymmetric {B}inary-{I}nput {M}emoryless
  {C}hannels,'' \emph{IEEE Transactions on Information Theory}, vol.~55, no.~7,
  pp. 3051 --3073, June 2009.

\bibitem{3gpp.38.212}
\BIBentryALTinterwordspacing
3GPP, ``{NR; Multiplexing and channel coding},'' {3rd Generation Partnership
  Project (3GPP)}, Technical Specification (TS) 38.212, 01 2019, version
  15.4.0. [Online]. Available:
  \url{https://portal.3gpp.org/desktopmodules/Specifications/SpecificationDetails.aspx?specificationId=3214}
\BIBentrySTDinterwordspacing

\bibitem{Balatsoukas2014llr}
A.~Balatsoukas-Stimming, M.~Bastani~Parizi, and A.~Burg, ``{LLR}-based
  successive cancellation list decoding of polar codes,'' in \emph{2014 IEEE
  International Conference on Acoustics, Speech and Signal Processing
  (ICASSP)}, 2014, pp. 3903--3907.

\bibitem{geiselhart2021autpolar}
M.~Geiselhart, A.~Elkelesh, M.~Ebada, S.~Cammerer, and S.~ten Brink, ``On the
  {A}utomorphism {G}roup of {P}olar {C}odes,'' in \emph{2021 IEEE International
  Symposium on Information Theory (ISIT)}, 2021, pp. 1230--1235.

\bibitem{kamenev2019polar}
M.~Kamenev, Y.~Kameneva, O.~Kurmaev, and A.~Maevskiy, ``Permutation decoding of
  polar codes,'' in \emph{2019 XVI International Symposium" Problems of
  Redundancy in Information and Control Systems"(REDUNDANCY)}.\hskip 1em plus
  0.5em minus 0.4em\relax IEEE, 2019, pp. 1--6.

\bibitem{pillet2021polar}
C.~Pillet, V.~Bioglio, and I.~Land, ``Polar {C}odes for {A}utomorphism
  {E}nsemble {D}ecoding,'' in \emph{2021 IEEE Information Theory Workshop
  (ITW)}, 2021, pp. 1--6.

\bibitem{pillet2021classification}
\BIBentryALTinterwordspacing
------, ``Classification of {A}utomorphisms for the {D}ecoding of {P}olar
  {C}odes,'' \emph{arXiv:2110.14438 [cs, math]}, 2021. [Online]. Available:
  \url{http://arxiv.org/abs/2110.14438}
\BIBentrySTDinterwordspacing

\bibitem{niu2012crc}
K.~Niu and K.~Chen, ``{CRC}-{A}ided {D}ecoding of {P}olar {C}odes,'' \emph{IEEE
  Communications Letters}, vol.~16, no.~10, pp. 1668--1671, 2012.

\bibitem{trifonov2016subcodes}
P.~Trifonov and V.~Miloslavskaya, ``Polar {S}ubcodes,'' \emph{IEEE Journal on
  Selected Areas in Communications}, vol.~34, no.~2, pp. 254--266, Feb 2016.

\bibitem{ivanov2021efficiency}
K.~Ivanov and R.~L. Urbanke, ``On the {E}fficiency of {P}olar-{L}ike {D}ecoding
  for {S}ymmetric {C}odes,'' \emph{IEEE Transactions on Communications},
  vol.~70, no.~1, pp. 163--170, 2022.

\bibitem{bardet2016algebraic}
M.~Bardet, V.~Dragoi, A.~Otmani, and J.-P. Tillich, ``Algebraic properties of
  polar codes from a new polynomial formalism,'' in \emph{2016 IEEE
  International Symposium on Information Theory (ISIT)}, July 2016, pp.
  230--234.

\bibitem{geiselhart2021autrm}
M.~Geiselhart, A.~Elkelesh, M.~Ebada, S.~Cammerer, and S.~t. Brink,
  ``Automorphism {E}nsemble {D}ecoding of {R}eed–{M}uller {C}odes,''
  \emph{IEEE Transactions on Communications}, vol.~69, no.~10, pp. 6424--6438,
  2021.

\bibitem{li2021complete}
Y.~Li, H.~Zhang, R.~Li, J.~Wang, W.~Tong, G.~Yan, and Z.~Ma, ``The {C}omplete
  {A}ffine {A}utomorphism {G}roup of {P}olar {C}odes,'' in \emph{2021 IEEE
  Global Communications Conference (GLOBECOM)}, 2021, pp. 01--06.

\bibitem{ivanov2020symmon}
\BIBentryALTinterwordspacing
K.~Ivanov and R.~L. Urbanke, ``On the dependency between the code symmetries
  and the decoding efficiency,'' in \emph{International Symposium on
  Information Theory and Its Applications, {ISITA} 2020, Kapolei, HI, USA,
  October 24-27, 2020}.\hskip 1em plus 0.5em minus 0.4em\relax {IEEE}, 2020,
  pp. 195--199. [Online]. Available:
  \url{https://ieeexplore.ieee.org/document/9366151}
\BIBentrySTDinterwordspacing

\bibitem{reeves2021rm}
\BIBentryALTinterwordspacing
G.~Reeves and H.~D. Pfister, ``{R}eed-{M}uller {C}odes {A}chieve {C}apacity on
  {BMS} {C}hannels,'' \emph{arXiv:2110.14631 [cs, math]}, 2021, arXiv:
  2110.14631. [Online]. Available: \url{http://arxiv.org/abs/2110.14631}
\BIBentrySTDinterwordspacing

\bibitem{hassani2014finite}
S.~H. Hassani, K.~Alishahi, and R.~L. Urbanke, ``Finite-{L}ength {S}caling for
  {P}olar {C}odes,'' \emph{IEEE Transactions on Information Theory}, vol.~60,
  no.~10, pp. 5875--5898, 2014.

\end{thebibliography}
\end{document}